\newtheorem{definition}{Definition}
\newtheorem{theorem}{Theorem}
\author{Idelfonso Izquierdo-Marquez\affiliationmark{1}
  \and Jose Torres-Jimenez\affiliationmark{1}}
\title{New Covering Array Numbers}
\affiliation{
  CINVESTAV-Tamaulipas, M{\'e}xico
}
\keywords{Covering array number, Juxtaposition of covering arrays, Non-isomorphic covering arrays}
\begin{document}
\maketitle
\begin{abstract}
A covering array $\mbox{CA}(N;t,k,v)$ is an $N\times k$ array on $v$ symbols such that every $N\times t$ subarray contains as a row each $t$-tuple over the $v$ symbols at least once. The minimum $N$ for which a $\mbox{CA}(N;t,k,v)$ exists is called the covering array number of $t$, $k$, and $v$, and it is denoted by $\mbox{CAN}(t,k,v)$. In this work we prove that if exists $\mbox{CA}(N;t+1,k+1,v)$ it can be obtained from the juxtaposition of $v$ covering arrays $\mbox{CA}(N_0;t,k,v)$, $\ldots$, $\mbox{CA}(N_{v-1};t,k,v)$, where $\sum_{i=0}^{v-1}N_i=N$. 
Based on this fact, we develop an algorithm that verifies that for each $v$-set of non-isomorphic covering arrays $\{\mbox{CA}(N_0;t,k,v)$, $\ldots$, $\mbox{CA}(N_{v-1};t,k,v)\}$, with $\sum_{i=0}^{v-1}N_i=N$, all possible juxtapositions of the $v$ covering arrays to see if at least one juxtaposition generates a $\mbox{CA}(N;t+1,k+1,v)$. 
By using this algorithm we determine the nonexistence of certain covering arrays which allow us to establish the new covering array numbers $\mbox{CAN}(4,13,2)=32$, $\mbox{CAN}(5,8,2)=52$, and $\mbox{CAN}(5,9,2)=54$. 
Other results of the algorithm improve the current lower bound of $\mbox{CAN}(6,9,2)$, $\mbox{CAN}(3,7,3)$, $\mbox{CAN}(3,9,3)$, and $\mbox{CAN}(4,7,3)$. 
Remarkable implications of the result $\mbox{CAN}(4,13,2)=32$ are the new covering array numbers $\mbox{CAN}(5,14,2)=64$, $\mbox{CAN}(6,15,2)$ = $128$, and $\mbox{CAN}(7,16,2)=256$. 
Our algorithm is able to find these CANs because it constructs the target covering array subcolum by subcolumn, where each subcolumn is a column of a smaller covering array.

\end{abstract}

\section{Introduction}
\label{sec:introduction}

Covering arrays (CAs) are combinatorial objects with applications in software and hardware testing. Recently CAs have been used to detect hardware Trojans \cite{7381800}. A covering array $\mbox{CA}(N;t,k,v)$ with strength $t$ and order $v$ is an $N\times k$ array over $\mathbb{Z}_v=\{0,1,\ldots,v-1\}$ such that every subarray formed by $t$ distinct columns contains as a row each $t$-tuple over $\mathbb{Z}_v$ at least once. In testing applications the columns of the CA represent parameters or inputs of the component under test, and a $\mbox{CA}(N;t,k,v)$ ensures to test all possible combinations of values among any $t$ inputs.

Given $t$, $k$, and $v$, the problem of constructing optimal CAs is the problem of determining the minimum $N$ for which a $\mbox{CA}(N;t,k,v)$ exists. This minimum $N$ is called the \emph{covering array number (CAN)} of $t$, $k$, and $v$, and it is denoted by $\mbox{CAN}(t,k,v)=\mbox{min}\{N:\exists\,\mbox{CA}(N;t,k,v)\}$. A similar problem is to find the maximum value of $k$ for which a $\mbox{CA}(N;t,k,v)$ exists; this $k$ is denoted by $\mbox{CAK}(N;t,v)=\mbox{max}\{k:\exists\,\mbox{CA}(N;t,k,v)\}$. Values CAN and CAK are related: $\mbox{CAN}(t,k,v)=\mbox{min}\{N:\mbox{CAK}(N;t,v)\geq k\}$ and $\mbox{CAK}(N;t,v)=\mbox{max}\{k:\mbox{CAN}(t,k,v)\leq N\}$.

Finding exact values of $\mbox{CAN}(t,k,v)$ has been a very difficult task for general values of $t$, $k$, $v$. Some relevant cases with known values of CAN are these: 
\begin{itemize}
\item $\mbox{CAN}(1,k,v)=v$ for each $k\geq 1$.
\item $\mbox{CAN}(t,t,v)=v^t$ for each $t\geq 1$.
\item $\mbox{CAN}(t,t+1,2)=2^{t}$ for each $t\geq 1$.
\item $\mbox{CAN}(t,v+1,v)=v^t$ when $v$ is prime power and $v>t$ \cite{Bush1952}.
\item $\mbox{CAN}(t,t+1,v)=v^t$ when $v$ is prime power and $v\leq t$ \cite{Colbourn2006}.
\item $\mbox{CAN}(3,v+2,v)=v^3$ when $v=2^n$ \cite{Bush1952}.
\item $\mbox{CAN}(2,k,2)=N$, where $N$ is the least positive integer that satisfies $\binom{N-1}{\lceil\frac{N}{2}\rceil}\geq k$ \cite{Katona1973,Kleitman1973255}.
\item $\mbox{CAN}(t,t+2,2)=\lfloor \frac{4}{3} 2^t \rfloor$ for each $t\geq 1$ \cite{JOHNSON1989346}.
\end{itemize}

Apart from these cases, only a few CANs have been determined. In \cite{Colbourn:2010:CRA:1786803.1786988} are listed some optimal CAs for $2\leq v \leq 8$, and in \cite{doi:10.1142/S1793830916500336,Kokkala2017} other CANs were determined by computational search.

There are two main ways to find covering array numbers: by combinatorial analysis and by computational search. In the first case are the works mentioned in the above list of known values of CAN. In the second case we have algorithms that explore the entire search space to determine existence or nonexistence of CAs. These algorithms are limited to small values of $N$, $t$, $k$, $v$ since the size of the search space grows exponentially. A first approximation of the size of the search space is $v^{Nk}$, which is the number of $N\times k$ matrices over $\mathbb{Z}_v$. Of course, some matrices, like the zero matrix, do not have possibilities of being a CA of strength $t$ and so not all these matrices need to be explored. Although limited to small cases, computational algorithms are one promising option to find CANs for some values of $t$, $k$, and $v$, where there is no ad-hoc combinatorial analysis.

In order to reduce the search space, the computational search uses a non-isomorphic search where only one candidate array is explored for each class of isomorphic arrays. This kind of algorithms are also known as orderly algorithms. There are three isomorphisms in CAs that can be used to bound the search space: row permutations, column permutations, and symbol permutations in a column. Then, any combination of row, column, and symbol permutations produce an equivalent CA, and for purposes of searching for existence only one equivalent CA should be explored. Some algorithms that take advantage of the isomorphisms in CAs are \cite{Yan2006,Hnich:2006:CMC:1140065.1140068,Bracho2009,doi:10.1142/S1793830916500336}. In these algorithms the CA is constructed element by element, and only the partial arrays that are sorted by rows and by columns are explored; this is done to avoid the exploration of isomorphic arrays obtained by row and column permutations. 

The present work addresses the task of finding exact values of $\mbox{CAN}(t,k,v)$, i.e. to find optimal CAs, by means of computational search. The strategy of our searching algorithm is significantly distinct from the strategies of previous algorithms. 
Instead of attempting to construct the target covering array, say $\mbox{CA}(N;t+1,k+1,v)$, from scratch or directly in the search space for $N$, $t+1$, $k+1$, $t$, our algorithm tries to construct $\mbox{CA}(N;t+1,k+1,v)$ by juxtaposing $v$ CAs of strength $t$ and $k$ columns $\mbox{CA}(N_0;t,k,v)$, $\mbox{CA}(N_1;t,k,v)$, $\ldots$, $\mbox{CA}(N_{v-1};t,k,v)$, where $\sum_{i=0}^{v-1}N_i=N$, and by adding to this juxtaposition a column formed by $N_i$ elements equal to $i$ for $0\leq i\leq v-1$. In this way, our algorithm looks for an array with a certain structure. The algorithm does not look for any array with $N$ rows and $k+1$ columns, it looks for an array with $N$ rows and $k+1$ columns formed by $v$ blocks, where each block is a CA of strength $t$ and $k$ columns. 
The fact that each block is a CA of strength $t$ and $k$ columns greatly reduces the candidate arrays that can be a $\mbox{CA}(N;t+1,k+1,v)$.

The above searching algorithm is used to determine existence or nonexistence of CAs. From the nonexistence of certain CAs we can derive the optimality of other ones. The main results obtained are the following CANs: $\mbox{CAN}(13,4,2)=32$, $\mbox{CAN}(5,8,2)=52$, and $\mbox{CAN}(5,9,2)=54$. To the best of our knowledge these CANs had not been determined before by any mean. Other computational results are the improvement of the lower bounds of $\mbox{CAN}(6,9,2)$, $\mbox{CAN}(3,7,3)$, $\mbox{CAN}(3,9,3)$, and $\mbox{CAN}(4,7,3)$.

The remainder of the document is organized as follows: Section \ref{sec:isomorphisms} gives more details about isomorphic and non-isomorphic CAs; Section \ref{sec:existence} presents the algorithm to determine the existence of a CA with strength $t+1$ and $k+1$ columns from the juxtaposition of $v$ CAs with strength $t$ and $k$ columns; Section \ref{sec:juxtapositions} shows an implementation of the crucial step of the algorithm, which is the generation of all possible juxtapositions of $v$ non-isomorphic CAs; Section \ref{sec:results} describes the executions of our algorithm to obtain the main results of the work; and Section \ref{sec:conclusions} gives some conclusions.

\section{Isomorphic and non-isomorphic CAs}
\label{sec:isomorphisms}

As mentioned before, there are three symmetries in CAs: row permutations, column permutations, and symbol permutations in a column. These operations do not change the coverage properties of a CA, and the CAs obtained by combining these operations are isomorphic to the initial CA. Then, two covering arrays $A$ and $B$ are isomorphic, which is denoted by $A\simeq B$, if $A$ can be derived from $B$ (and vice versa, $B$ can be derived from $A$) by a combination of a row permutation, a column permutation, and a symbol permutation in each column of $B$.

A covering array $A=\mbox{CA}(N;t,k,v)$ has $N!k!(v!)^k$ isomorphic CAs, because there are $N!$ possible row permutations, $k!$ possible column permutations, and $(v!)^k$ possible combinations of symbol permutations in the columns of $A$. Symbol permutations are also called \emph{column relabelings} or simply \emph{relabelings}. On the other hand, two CAs $A$ and $B$ are non-isomorphic if it is not possible to derive $A$ from $B$ by permutations of rows, columns, and symbols in the columns. The non-isomorphic CAs are the truly distinct CAs. In this work the terms ``non-isomorphic'' and ``distinct'' will be used interchangeably when they refer to CAs. 

The set of all CAs with the same parameters $N$, $t$, $k$, $v$, can be partitioned in classes of isomorphic CAs. Thus, the relation of being isomorphic is an equivalence relation in the set of all CAs with the same parameters $N$, $t$, $k$, $v$. All CAs in the same class are equivalent, but sometimes it is convenient to take the smallest CA in lexicographic order as the representative of the class.

For given $A=\mbox{CA}(N;t,k,v)$ denote by $\lambda(A)=(a_0,a_1,\ldots,a_{Nk-1})$ the vector constructed by arranging in column-major order the elements of $A$. The following definitions taken from \cite{doi:10.1142/S1793830916500336} introduce the concept of lexicographic order for CAs, and define CAs with minimum lexicographic order:

\begin{definition}
Given two CAs $A$ and $B$ with parameters $N$, $t$, $k$, and $v$, one of the following conditions must occur:
\begin{itemize}
\item 
$\lambda(A) = \lambda(B)$ iff $(a_i=b_i)$ for all $i$.
\item
$\lambda(A)\succ \lambda(B)$ iff exists $i$ such that $(a_i > b_i)$ and $(a_j=b_j)$ for all $j<i$.
\item
$\lambda(A)\prec \lambda(B)$ iff exists $i$ such that $(a_i < b_i)$ and $(a_j=b_j)$ for all $j<i$
\end{itemize}
\label{def:orderRelations}
\end{definition}

\begin{definition}
Given $A=\mbox{CA}(N;t,k,v)$, $A^*$ defines the CA isomorphic to $A$ with the minimum le\-xi\-co\-gra\-phi\-cal order
  iff $(A^*\simeq A \wedge ( \forall B\simeq A\,\,(\lambda(B)=\lambda(A^*))\vee (\lambda(B) \succ \lambda(A^*))))$.
\label{def:minimum}
\end{definition}

The CA with the minimum lexicographic order in an isomorphism class will be called the \emph{minimum} of the class. In \cite{doi:10.1142/S1793830916500336} it was developed an algorithm called NonIsoCA that generates the minimum of every isomorphism class. This algorithm will be used to generate the non-isomorphic CAs required by our searching algorithm developed in Section \ref{sec:existence}.

\section{Existence of CAs}
\label{sec:existence}

In this work the existence or nonexistence of a $\mbox{CA}(N;t+1,k+1,v)$ is determined by checking all possible ways of juxtaposing vertically $v$ CAs with order $v$, strength $t$, and $k$ columns, and by adding to this juxtaposition a column formed by $v$ column vectors of constant elements. Determining existence or nonexistence of CAs is the key to find new covering array numbers. The base of the strategy to determine existence of CAs is the following theorem:

\begin{theorem}
$A\  \mbox{CA}(N;t+1,k+1,v)$ exists if and only if there exist $v$ covering arrays $\mbox{CA}(N_0;t,k,v)$, $\mbox{CA}(N_1;t,k,v)$, $\ldots$, $\mbox{CA}(N_{v-1};t,k,v)$, where $\sum_{i=0}^{v-1}N_i=N$, that juxtaposed vertically form a $\mbox{CA}(N;t+1,k,v)$.
\label{thm:main}
\end{theorem}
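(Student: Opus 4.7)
The plan is to prove both directions separately using only the definition of a covering array. For the forward direction, I would start with a $\mbox{CA}(N;t+1,k+1,v)$, say $C$, and group its rows according to the value appearing in its last column. Since strength $t+1\geq 1$ guarantees that every symbol $i\in\mathbb{Z}_v$ appears in the last column, I set $N_i$ to be that multiplicity, so $\sum_{i=0}^{v-1} N_i=N$. Applying a row permutation (which preserves all coverage properties), I may assume the rows are arranged so the last column equals $(0^{N_0},1^{N_1},\ldots,(v-1)^{N_{v-1}})^T$. Let $B_i$ be the $N_i\times k$ block consisting of those rows whose last entry is $i$ with the last column deleted. The core claim is that each $B_i$ is a $\mbox{CA}(N_i;t,k,v)$: for any $t$-tuple $(a_0,\ldots,a_{t-1})$ and any $t$ columns of $B_i$, the $(t+1)$-tuple $(a_0,\ldots,a_{t-1},i)$ must appear in $C$ on the same $t$ columns together with the last column, and any such occurrence is forced to lie in a row of $B_i$. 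The vertical juxtaposition of the $B_i$'s is exactly $C$ with its last column removed, which inherits strength $t+1$ on its $k$ columns, giving the required $\mbox{CA}(N;t+1,k,v)$.

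For the reverse direction, I would take the hypothesized juxtaposition $J$, which is a $\mbox{CA}(N;t+1,k,v)$, and append the column
\[
z=(\underbrace{0,\ldots,0}_{N_0},\underbrace{1,\ldots,1}_{N_1},\ldots,\underbrace{v-1,\ldots,v-1}_{N_{v-1}})^T,
\]
producing an $N\times(k+1)$ array $C$. I would then show that $C$ is a $\mbox{CA}(N;t+1,k+1,v)$ by splitting on whether a chosen set of $t+1$ columns contains $z$. If it does not, coverage of all $(t+1)$-tuples follows immediately from $J$ being of strength $t+1$. If it does, then for any prescribed tuple $(a_0,\ldots,a_{t-1},i)$, the block of rows where $z=i$ restricts in the remaining $t$ columns to a $t$-column subarray of the $i$-th input array $\mbox{CA}(N_i;t,k,v)$, which by hypothesis contains $(a_0,\ldots,a_{t-1})$ as a row.

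The proof is essentially a bookkeeping exercise about how row permutations and column selections interact with tuple coverage, so no deep tools are required. The one point that demands attention is that the backward hypothesis must include the juxtaposition itself being of strength $t+1$ on the $k$ inner columns, not merely that each block be of strength $t$; otherwise the case in which the selected $t+1$ columns lie entirely among the first $k$ would not be covered. This is precisely why the theorem statement couples the two conditions on the right-hand side and why the right reading of the construction is ``juxtapose, then append the constant column.''
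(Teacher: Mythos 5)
Your proposal is correct and follows essentially the same argument as the paper: sort the rows of $C$ by the last column, observe that coverage of the $(t+1)$-tuples ending in $i$ forces each block to be a $\mbox{CA}(N_i;t,k,v)$ while the first $k$ columns retain strength $t+1$, and conversely append the constant-block column and split on whether it is among the chosen $t+1$ columns. The only difference is presentational — you make the case split in the converse direction explicit, which the paper leaves implicit in the hypothesis that the juxtaposition has strength $t+1$.
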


\begin{proof}
Assume $C=\mbox{CA}(N;t+1,k+1,v)$ exists. For $0\leq i\leq v-1$ let $N_i$ be the number of elements equal to $i$ in the last column of $C$. Construct $C'$ isomorphic to $C$ by reordering the rows of $C$ in such a way the elements of the last column of $C'$ are sorted in non-decreasing order. For $0\leq i\leq v-1$ let $B_i$ be the block of the $N_i$ rows of $C'$ where the symbol in the last column is $i$. Divide $B_i$ in two blocks: $A_i$ containing the first $k$ columns, and $\mathbf{i}$ containing the last column; then $C'$ has the following structure:

\begin{displaymath}
C'=
\begin{pmatrix}
A_0 & \mathbf{0} \\
A_1 & \mathbf{1} \\
\vdots & \vdots \\
A_{v-1} & \mathbf{v-1} \\
\end{pmatrix} 
\end{displaymath}

The juxtaposition of blocks $A_0,A_1,\ldots,A_{v-1}$ form a $\mbox{CA}(N;t+1,k,v)$ because $C'$ has strength $t+1$; then, to complete the first part of the proof we need to show that blocks $A_0,A_1,\ldots,A_{v-1}$ are CAs of strength $t$. Index columns of $C'$ starting from 0, so the last column of $C'=\mbox{CA}(N;t+1,k+1,v)$ has index $k$. Any combination $(c_0,c_1,\ldots,c_{t}=k)$ of $t+1$ columns containing the last column of $C'$ covers in block $B_i$ all $(t+1)$-tuples of the form $(x_0,x_1,\ldots,x_{t-1},x_t=i)$ over $\mathbb{Z}_v$. Thus, every combination of $t$ columns $(c_0,c_1,\ldots,c_{t-1})$ from the first $k$ columns of $C'$ covers all $t$-tuples $(x_0,x_1,\ldots,x_{t-1})$ over $\mathbb{Z}_v$ in every block $A_i$, and therefore $A_i=\mbox{CA}(N_i;t,k,v)$.

Now, suppose there are $v$ covering arrays $A_0=\mbox{CA}(N_0;t,k,v)$, $A_1=\mbox{CA}(N_1;t,k,v)$, $\ldots$, $A_{v-1}=\mbox{CA}(N_{v-1};t,k,v)$, whose vertical juxtaposition forms $G=\mbox{CA}(N;t+1,k,v)$ of strength $t+1$, where $\sum_{i=0}^{v-1}N_i=N$. Let $E=(\mathbf{0}\;\mathbf{1}\;\cdots\;\mathbf{v-1})^T$ be the column formed by concatenating vertically $N_i$ elements equal to $i$ for $0\leq i\leq v-1$. Because every $A_i$ is a CA of strength $t$ we have that for $0\leq i\leq v-1$ any submatrix formed by $t$ columns of $A_i$ and by $\mathbf{i}$ covers all $(t+1)$-tuples of the form $(x_0,x_1,\ldots,x_{t-1},x_t=i)$. Then, any sumbatrix formed by $t$ columns of $G$ and by column $E$ covers all $(t+1)$-tuples over $\mathbb{Z}_v$, and therefore the horizontal concatenation of $G$ and $E$ is a $\mbox{CA}(N;t+1,k+1,v)$.
\end{proof}

By Theorem \ref{thm:main} if a $\mbox{CA}(N;t+1,k+1,v)$ exists then it can be constructed by juxtaposing vertically $v$ CAs with strength $t$ and $k$ columns. Also, if $\mbox{CA}(N;t+1,k+1,v)$ does not exists then there are no $v$ CAs with strength $t$ and $k$ columns that juxtaposed vertically form a $\mbox{CA}(N;t+1,k,v)$. 

The algorithm developed in this work to determine the existence of $\mbox{CA}(N;t+1,k+1,v)$ verifies all possible juxtapositions of $v$ CAs with strength $t$ and $k$ columns to see if one of them produces a $\mbox{CA}(N;t+1,k,v)$. In the negative case, $\mbox{CA}(N;t+1,k+1,v)$ does not exist; and in the positive case $\mbox{CA}(N;t+1,k+1,v)$ exists and the algorithm generates all non-isomorphic $\mbox{CA}(N;t+1,k+1,v)$. 

The first step of the algorithm is to determine the multisets $S_j$ =$\{N_0$, $N_1$, $\ldots$, $N_{v-1}\}$ of $v$ elements such that $N_i\geq \mbox{CAN}(t,k,v)$ and $\sum_{i=0}^{v-1}N_i=N$. These multisets will be called \emph{valid multisets}. To determine for example if $\mbox{CA}(27;3,5,3)$ exists we need to check all juxtapositions of three CAs with strength two, four columns, and number of rows given by $S_0=\{9,9,9\}$. In this case $S_0=\{9,9,9\}$ is the unique valid multiset because $\mbox{CAN}(2,4,3)=9$, and therefore if exists $\mbox{CA}(27;3,5,3)$ is necessarily composed by three $\mbox{CA}(9;2,4,3)$. On the other hand, to determine if $\mbox{CA}(29;3,5,3)$ exists, the multisets to consider are $S_0=\{9,9,11\}$ and $S_1=\{9,10,10\}$, because $\mbox{CA}(29;3,5,3)$ can be composed by two CAs of nine rows and one CA of eleven rows, or by one CA of nine rows and two CAs of ten rows.

The second step of the algorithm is to generate the non-isomorphic CAs (i.e., the minimum CAs of every isomorphism class) with strength $t$, $k$ columns, and number of rows given by a valid multiset $S_j=\{N_0,N_1,\ldots,N_{v-1}\}$. From each non-isomorphic CA the other members of its isomorphism class will be derived by permutations of rows, columns, and symbols. To construct the non-isomorphic CAs we can use the NonIsoCA algorithm of \cite{doi:10.1142/S1793830916500336}, or any other algorithm for the same purpose.

Given a valid multiset $S_j=\{N_0,N_1,\ldots,N_{v-1}\}$, let $D_i$ ($0\leq i\leq v-1$) be the set of all non-isomorphic $\mbox{CA}(N_i;t,k,v)$. From the CAs in the sets $D_i$ will be generated all juxtapositions of $v$ CAs whose number of rows are given by $S_j$. In the example with $S_0=\{9,9,11\}$, the sets $D_0$ and $D_1$ contain the non-isomorphic $\mbox{CA}(9;2,4,3)$, and the set $D_2$ contains the non-isomorphic $\mbox{CA}(11;2,4,3)$. Now, let $P_j=\{(A_0,A_1,\ldots,A_{v-1}):A_i\in D_i \mbox{\;for \;} 0\leq i \leq v-1)\}$ be the Cartesian product of the sets $D_i$; then $P_j$ contains all possible ways to combine the non-isomorphic CAs with number of rows given by $S_j$. 

The next step of the algorithm is to check all juxtapositions derived from a tuple of $P_j$. For a tuple $T=(A_0,A_1,\ldots,A_{v-1}) \in P_j$, let $[A_0;A_1;\cdots;A_{v-1}]$ denote the juxtaposition of the $v$ CAs in $T$. From this array it will be generated all arrays $J=[A_{r_0}';A_{r_1}';\cdots;A_{r_{v-1}}']$ where each $A_{r_s}'$ is derived from exactly one $A_i\in T$ by permutation of rows, columns, and symbols in the columns; in other words, indices $r_0,r_1,\ldots,r_{v-1}$ are a permutation $\pi$ of $(0,1,\ldots,v-1)$, and $A_{\pi(i)}'$ is isomorphic to $A_i$. 

The total number of arrays $J$ derived from one tuple $T\in P_j$ is $v!\prod_{i=0}^{v-1}N_i!k!(v!)^k$. Each array $J$ is checked to see if it is a $\mbox{CA}(N;t+1,k,v)$. If this is the case, then $\mbox{CA}(N;t+1,k+1,v)$ exists by Theorem \ref{thm:main}, and this CA is obtained by adding to $J$ the column $E=(\mathbf{0}\;\mathbf{1}\;\cdots\;\mathbf{v-1})^T$.

For each tuple of $P_j$ all possible arrays $J$ are generated; then, all possible juxtapositions of $v$ CAs with strength $t$, $k$ columns, and number of rows given by a valid multiset $S_j$ are explored. Since this is done for every valid multiset $S_j$, we have that all possible juxtapositions of $v$ CAs with strength $t$ and $k$ columns are explored. 

The number of juxtapositions verified to determine the existence of $\mbox{CA}(N;t+1,k+1,v)$ may be very large, however some juxtapositions produce isomorphic arrays, and to accelerate the search we need to skip as many isomorphic arrays as possible. Fortunately, the number of arrays $J$ created for a tuple $T=(A_0,A_1,\ldots,A_{v-1})$ of $P_j$ can be reduced considerably. Consider the horizontal of an array $J$ and the column $E=(\mathbf{0}\;\mathbf{1}\;\cdots\;\mathbf{v-1})^T$, denoted as $(JE)$:

\begin{displaymath}
(JE)=
\begin{pmatrix}
A_{r_{0}}' & \mathbf{0} \\
A_{r_{1}}' & \mathbf{1} \\
\vdots & \vdots \\
A_{r_{v-1}}' & \mathbf{v-1} \\
\end{pmatrix} 
\end{displaymath}

We can reorder the rows of $(JE)$ in such a way the array derived from $A_0$ is placed in the first rows of $(JE)$, the array derived from $A_1$ is placed next, and so on. This permutation of rows produces an array $(JE)'$ isomorphic to $(JE)$, and by a permutation of symbols in the last column of $(JE)'$ it is possible to transform the last column of $(JE)'$ in $(\mathbf{0}\;\mathbf{1}\;\cdots\;\mathbf{v-1})^T$:

\begin{displaymath}
(JE)=
\begin{pmatrix}
A_{r_{0}}' & \mathbf{0} \\
A_{r_{1}}' & \mathbf{1} \\
\vdots & \vdots \\
A_{r_{v-1}}' & \mathbf{v-1} \\
\end{pmatrix} 
\simeq
\begin{pmatrix}
A_{0}' & \mathbf{l_0} \\
A_{1}' & \mathbf{l_1} \\
\vdots & \vdots \\
A_{v-1}' & \mathbf{l_{v-1}} \\
\end{pmatrix} 
\simeq
\begin{pmatrix}
A_{0}' & \mathbf{0} \\
A_{1}' & \mathbf{1} \\
\vdots & \vdots \\
A_{v-1}' & \mathbf{v-1} \\
\end{pmatrix} 
=(JE)'
\end{displaymath}

Therefore, the arrays $J$ to be generated from a tuple $T$ = $(A_0$, $A_1$, $\ldots$, $A_{v-1})$ of $P_j$ are those arrays $J=[A_0';A_1';\cdots;A_{v-1}']$ where for $0\leq i\leq v-1$ the array $A_i'$ is derived from $A_i$ by permutations of rows, columns, and symbols.

Another reduction in the number of arrays $J=[A_0';A_1';\cdots;A_{v-1}']$ is possible: we can permute the first $N_0$ rows of $J$, and permute the columns and symbols in the entire array $J$ to get an array $J'=[A_0'';A_1'';\cdots;A_{v-1}'']$, where $A_0''=A_0$ and $A_1'',\ldots,A_{v-1}''$ are another CAs isomorphic to the original arrays $A_1,\ldots,A_{v-1}$. Thus, the following arrays are isomorphic:

\begin{displaymath}
\begin{pmatrix}
A_{0}' & \mathbf{0} \\
A_{1}' & \mathbf{1} \\
\vdots & \vdots \\
A_{v-1}' & \mathbf{v-1} \\
\end{pmatrix} 
\simeq
\begin{pmatrix}
A_0 & \mathbf{0} \\
A_{1}'' & \mathbf{1} \\
\vdots & \vdots \\
A_{v-1}'' & \mathbf{v-1} \\
\end{pmatrix} 
\end{displaymath}

In this way, the arrays $J$ to be generated from a tuple $T=(A_0$, $A_1$, $\ldots$, $A_{v-1})$ of $P_j$ are those arrays $J=[A_0;A_1';\cdots;A_{v-1}']$ where $A_0$ is fixed, and for $1\leq i\leq v-1$ the array $A_i'$ is derived from $A_i$ by permutations of rows, columns, and symbols. However, note that block $A_i'$ of $J$ is complemented with a column vector $\mathbf{i}$ formed by $N_i$ elements equal to $i$. Then, any row permutation of $A_i'$ produces an array isomorphic to $J$. On the contrary, column and symbol permutations in $A_i'$ do not produce in general arrays isomorphic to $J$. Therefore, the only arrays $A_i'$ that are necessary to explore are those derived from $A_i$ by permutations of columns and symbols. In this way, we have reduced the number of arrays $J$ to be generated from a tuple $T\in P_j$ from  $v!\prod_{i=0}^{v-1}N_i!k!(v!)^k$ to $\prod_{i=1}^{v-1}k!(v!)^k$.

Algorithm \ref{alg:main} implements the algorithm to determine existence of CAs. From the input parameters $k'$ and $t'$ are obtained the number of columns $k=k'-1$ and the strength $t=t'-1$ of the CAs to be juxtaposed. The generation of the valid multisets $\{N_0,N_1,\ldots,N_{v-1}\}$ can be accomplished without difficulty, but it requires to known the value of $\mbox{CAN}(t,k,v)$. The construction of the sets $D_i$ requires the computation of the non-isomorphic $\mbox{CA}(N_i;t,k,v)$, which as mentioned before can be done with any algorithm to generate distinct CAs. The key function is \emph{generate\_juxtapositions}($T$), where arrays $J=[A_0;A_1';\cdots;A_{v-1}']$ are generated from a tuple $T$ of the set $P$; this function will be described in Section \ref{sec:juxtapositions}.

\begin{algorithm}
$k\gets k'-1$\;
$t\gets t'-1$\;
$R\gets \emptyset$\;
$\mathcal{S}\gets$ all multisets $\{N_0,N_1,\ldots,N_{v-1}\}$ such that $N_i\geq \mbox{CAN}(t,k,v)$ and $\sum_{i=0}^{v-1}N_i=N$\;
\ForEach{$S\in\mathcal{S}$}{
  \For{$i=0,\ldots,v-1$}{
    $D_i\gets$ all non-isomorphic $\mbox{CA}(N_i;t,k,v)$\;
  }
  $P=D_0\times D_1\times \cdots \times D_{v-1}=\{(A_0,A_1,\ldots,A_{v-1}): A_i\in D_i \mbox{ for } 0\leq i\leq v-1)\}$\;
  \ForEach{$T=(A_0,A_1,\ldots,A_{v-1})\in P$}{
    \emph{generate\_juxtapositions}$(T)$\;
  }
}
\If{$R=\emptyset$}{
  $\mbox{CA}(N;t+1,k+1,v)$ does not exist\;
}
\Else{
  $\mbox{CA}(N;t+1,k+1,v)$ exists and $R$ contains the minimum of each isomorphism class\;
}
\caption{construct($N,k',t',v$)\label{alg:main}}
\end{algorithm} 

For each $C = \mbox{CA}(N;t+1,k+1,v)$ constructed by \emph{generate\_juxtapositions}() we obtain the minimum array $C^{*}$ of the isomorphism class to which $C$ belongs, and then $C^{*}$ is added to the set $R$ of distinct CAs. To obtain $C^{*}$ we assume the existence of a function \emph{minimum}$(X)$ which computes and returns $X^*$ for given $X$. This function can be derived from a slight modification of the function \emph{is\_minimum}$(X,r)$ of \cite{doi:10.1142/S1793830916500336}.

\section{Generation of juxtapositions}
\label{sec:juxtapositions}

The crucial step of the algorithm to determine the existence of $\mbox{CA}(N;t+1,k+1,v)$ is to generate all arrays $J=[A_0;A_1';\cdots;A_{v-1}']$ from a given $v$-tuple of CAs $T=(A_0$, $A_1$, $\ldots$, $A_{v-1})$. Recall that in each $J$, the array $A_0$ is fixed, and for $1\leq i\leq v-1$ the array $A_i'$ is derived from $A_i$ by permutations of columns and symbols. After generating an array $J$, the algorithm checks if $J$ is a $\mbox{CA}(N;t+1,k,v)$. 

This section presents an algorithm to perform this crucial step. The algorithm constructs $J$ one column at a time, validating that each new column forms a CA of strength $t+1$ with the columns previously added to $J$. This is done to avoid the exploration of arrays $J$ with no possibilities of being a $\mbox{CA}(N;t+1,k,v)$. The algorithm starts by constructing the following array $J$, in which block $A_0$ is fixed, and blocks $F_1,\ldots,F_{v-1}$ are unassigned or free; later on these arrays will be filled with arrays derived from $A_1,\ldots,A_{v-1}$ by permutations of columns and symbols:

\begin{displaymath}
J=
\begin{pmatrix}
A_0 \\
F_1 \\
\vdots  \\
F_{v-1} \\
\end{pmatrix} 
\end{displaymath}

For $1\leq i\leq v-1$ let $f_{i_0},f_{i_1},\ldots,f_{i_{k-1}}$ be the $k$ columns of $F_i$, and let $a_{i_0}$, $a_{i_1}$, $\ldots$, $a_{i_{k-1}}$ be the $k$ columns of $A_i$. Then, the previous array $J$ is equivalent to this one:

\begin{displaymath}
J=
\begin{pmatrix}
a_{0_0} & a_{0_1} & \cdots & a_{0_{k-1}} \\
f_{1_0} & f_{1_1} & \cdots & f_{1_{k-1}} \\
f_{2_0} & f_{2_1} & \cdots & f_{2_{k-1}} \\
\vdots & \vdots & \ddots & \vdots  \\
f_{{v-1}_{0}} & f_{{v-1}_{1}} & \cdots & f_{{v-1}_{k-1}} \\
\end{pmatrix} 
\end{displaymath}

The algorithm fills column 0 in all free blocks, then it fills column 1 in all free blocks, and so on. In this way, columns $f_{1_0},f_{2_0},\ldots,f_{v-1_{0}}$ are filled first, then columns $f_{1_1},f_{2_1},\ldots,f_{v-1_{1}}$ are filled, and so on. When the first $t+1$ columns of all free blocks have been filled or assigned, the algorithm checks if they form a CA of strength $t+1$. Columns are indexed from 0, so the first $t+1$ columns of $J$ are formed by columns $a_{0_0}, a_{0_1}, \ldots, a_{0_{t}}$, and by columns $f_{i_0}, f_{i_1}, \ldots, f_{i_{t}}$ for $1\leq i\leq v-1$. 

If the first $t+1$ columns of $J$ form a CA of strength $t+1$, then the algorithm advances to the next column of the free blocks, and column $f_{1_{t+1}}$ is assigned, then column $f_{2_{t+1}}$ is assigned, and so on until column $f_{{v-1}_{t+1}}$ is assigned. After that, the algorithm verifies if the current first $t+2$ columns of $J$ form a CA of strength $t+1$. In the negative case the current value of $f_{{v-1}_{t+1}}$ is replaced by its next available value to see if now the first $t+2$ columns of $J$ form a CA of strength $t+1$. This is done for all available values of $f_{{v-1}_{t+1}}$, and when all values are checked the algorithm backtracks to $f_{{v-2}_{t+1}}$ and assigns to it its next available value; after that, the algorithm advances to $f_{{v-1}_{t+1}}$ to check again all its available values.

To construct all possible arrays $J$ the algorithm fills the free block $F_i$ with all isomorphic CAs derived from $A_i$ by permutations of columns and symbols. Thus, the possible values for a column of $F_i$ are the columns obtained by permuting symbols in the columns of $A_i$; so the number of available values for a column of $F_i$ is $(v!)^k$. When the first $r$ columns of $F_i$ have been assigned the number of available values for $f_{i_r}$ is $(v!)^{k-r}$, which are the $v!$ relabelings of the columns of $A_i$ not currently assigned to one of the first $r$ columns of $F_i$.

In every free block $F_i$ the algorithm works as follows: columns of $A_i$ are added to $F_i$ in such a way $f_{i_0}$ gets all columns $a_{i_0},\ldots,a_{i_{k-1}}$ in order; then for a fixed value of $f_{i_0}$, column $f_{i_1}$ gets in order all columns of $F_i$ distinct to the one assigned to $f_{i_0}$; and for fixed values of $f_{i_0}$ and $f_{i_1}$, column $f_{i_2}$ gets in order all columns of $A_i$ not currently assigned to $f_{i_0}$ or $f_{i_1}$; the same applies for the other columns of $F_i$. In this way $F_i$ gets all CAs derived from $A_i$ by permutation of columns. 

However, for each permutation of columns of $A_i$ the algorithm of Section \ref{sec:existence} requires to test all possible symbol permutations in the columns of $A_i$. Symbol permutations are integrated in the following way: suppose the first $r$ columns of $F_i$ have been assigned, and suppose the next free column $f_{i_r}$ of $F_i$ gets assigned column $a_{i_j}$ of $A_i$; we can consider that the current value of $f_{i_r}$ is the identity relabeling of $a_{i_j}$; the next $v!-1$ values to assign to $f_{i_r}$ are the other $v!-1$ relabelings of $a_{i_j}$. When all relabelings of $a_{i_j}$ are assigned to $f_{i_r}$, the next value for $f_{i_r}$ is the identity relabeling of the next column of $A_i$ that has not been assigned to $f_{i_r}$. 

The function \emph{generate\_juxtapositions}() of Algorithm \ref{alg:juxtapose} receives as parameter a $v$-tuple of CAs. The work of this function is to initialize the fixed block $A_0$ of $J$, and to initialize with \emph{FALSE} the elements of a $v\times k$ matrix called \emph{assigned}; this matrix is used to record which columns of $A_i$ are currently assigned to a column of $F_i$. The last sentence of the function \emph{generate\_juxtapositions}() is a call to the function \emph{add\_column}(), which fills the free blocks $F_i$ with CAs derived from $A_i$ by permutations of columns and symbols. The function \emph{add\_column}() is called from \emph{generate\_juxtapositions}() with arguments 1 and 0, because the first column to fill in array $J$ is column $0$ of the free block $F_1$, or $f_{1_0}$. 

\begin{algorithm}[t]
\tcc{for $0\leq i\leq v-1$, $A_i=\mbox{CA}(N_i;t,k,v)$}
$J\gets $ array($N,k$)\;
\For{$i=0,\ldots,N_0-1$}{\For{$j=0,\ldots,k-1$}{$J[i][j]\gets A_0[i][j]$}}
\For{$i=0,\ldots,v-1$}{\For{$j=0,\ldots,k-1$}{\emph{assigned}$[i][j]\gets$ \emph{FALSE}}}
add\_column($1,0$)\;
\caption{generate\_juxtapositions($T=(A_0,A_1,\ldots,A_{v-1})$)\label{alg:juxtapose}}
\end{algorithm}

The function \emph{add\_column}() of Algorithm \ref{alg:addColumn} receives as parameters an index $i$ of a free block and an index $r$ of a column of the free block; the work of the function is to set column $f_{i_r}$. The variable $F_i$ is used as an alias of the block of $J$ where a CA derived from $A_i$ will be placed. The function relies on recursion to assign column 0 of every free block, then to assign column 1 of every free block, and so on. In addition, in every free block $F_i$ recursion allows to test in column $r$ all columns of $A_i$ not currently assigned to a column of $F_i$; the main \emph{for} loop iterates over all columns $j$ of $A_i$, but the body of the loop is executed only for those columns $j$ for which \emph{assigned}$[i][j]$ is equal to \emph{FALSE}. Recursion also allows to check in order the $v!$ symbol permutations $\epsilon_0, \epsilon_1, \ldots, \epsilon_{v!-1}$ of the column of $A_i$ assigned to column $r$ of $F_i$. If a $\mbox{CA}(N;t+1,k,v)$ is constructed, then column $E$ is appended to it to form $C=\mbox{CA}(N;t+1,k+1,v)$; finally, the function obtains $C^*$ and adds it to the set $R$ of non-isomorphic $\mbox{CA}(N;t+1,k+1,v)$. 

\begin{algorithm}[t]
\tcc{Arrays $J$ and $A_0,\ldots,A_{v-1}$ of \emph{generate\_juxtapositions}() are accessible from here}
\tcc{$F_i$ is an alias of the block of $J$ to be filled with a CA isomorphic to $A_i$}
\For{$j=0,\ldots,k-1$}{
  \If{$\mbox{assigned}[i][j] = \mbox{FALSE}$}{
    $\mbox{\emph{assigned}}[i][j]\gets \mbox{\emph{TRUE}}$\;
    \ForEach{\emph{permutation $\epsilon$ of the symbols $\{0,1,\ldots,v-1\}$}}{
      copy column $j$ of $A_i$ to column $r$ of $F_i$\;
      permute the symbols of column $r$ of $F_i$ according to $\epsilon$\;
      \If{$i=v-1$}{
        \If{$r<t$ \textbf{\emph{or}} \emph{is\_covering\_array($J,r)=\mbox{\emph{TRUE}}$}}{
          \If{$r=k-1$}{
            $C\gets (JE)$\;
            $C^*\gets$ minimum($C$)\;
            \If{$C^*\not\in R$}{$R\gets R\cup\{C^*\}$}
          }
          \Else{
            add\_column($1,r+1$)
          }
        }
      }
      \Else{
        add\_column($i+1,r$)
      }
    }
    $\mbox{\emph{assigned}}[i][j]\gets \mbox{\emph{FALSE}}$\;
  }
}
\caption{add\_column($i,r$)\label{alg:addColumn}}
\end{algorithm}

For a $v$-tuple of CAs $T=(A_0,A_1,\ldots,A_{v-1})$, Algorithm \ref{alg:juxtapose} and its helper function Algorithm \ref{alg:addColumn} generate in the worst case $\prod_{i=1}^{v-1}k!(v!)^k$ arrays $J=[A_0;A_1';\cdots;A_{v-1}']$, since array $A_0$ is fixed and $A_1',\ldots,A_{v-1}'$ are derived respectively from $A_1,\ldots,A_{v-1}$ by permutations of columns and symbols. However, the condition that every new column added to the partial array $J$ must form a CA of strength $t+1$ with the previous columns of $J$ reduces the number of arrays $J$ explored. For example if the condition fails at the column with index $j$, then in each free block $F_i$ we skip the remaining $(k-j-1)!$ permutations of columns for the free columns $f_{i_{j+1}},\ldots,f_{i_{k-1}}$, plus the $(v!)^{k-j-1}$ associated column relabelings. 

We can see in Algorithm \ref{alg:addColumn} what makes our algorithm significantly distinct from previous ones. The target covering array $\mbox{CA}(N;t+1,k+1,v)$ is not constructed element by element, but subcolumn by subcolumn, where a subcolumn is a column of a CA of strength $t$. Nevertheless, our algorithm requires the construction of the non-isomorphic CAs of strength $t$ and $k$ columns, which could had been constructed element by element. However, the cost of constructing the non-isomorphic $\mbox{CA}(N_i;t,k,v)$ plus the cost of exploring the juxtapositions of $v$ CAs derived from then by permutations of columns and symbols, is smaller than the cost of constructing $\mbox{CA}(N;t+1,k+1,v)$ element by element, if the number of distinct $\mbox{CA}(N_i;t,k,v)$ is not very large, as we will see in the next section.

\section{Computational results}
\label{sec:results}

The relevant computational results obtained in this work are the covering array numbers $\mbox{CAN}(4,13,2) = 32$, $\mbox{CAN}(5,8,2)=52$, and $\mbox{CAN}(5,9,2)=54$; as well as the uniqueness of $\mbox{CA}(33;3,6,3)$; and the improvement of the lower bounds of $\mbox{CAN}(6,9,2)$, $\mbox{CAN}(3,7,3)$, $\mbox{CAN}(3,9,3)$, and $\mbox{CAN}(4,7,3)$. All these results are consequences of the nonexistence of certain CAs.

\subsection{$\mbox{CAN}(4,13,2)=32$}
\label{sec:optimalityN32k13t4v2}

The current lower bound of $\mbox{CAN}(4,13,2)$ is 30 \cite{Colbourn:2010:CRA:1786803.1786988}, and its current upper bound is 32 \cite{TorresJimenez2012137}. In this section we prove that no $\mbox{CA}(30;4,13,2)$ and no $\mbox{CA}(31;4,13,2)$ exist, and therefore $\mbox{CAN}(4,13,2)=32$.

By Theorem \ref{thm:main} if $\mbox{CA}(30;4,13,2)$ exists, then there exist two covering arrays $\mbox{CA}(N_0;3,12,2)$ and $\mbox{CA}(N_1;3,12,2)$ such that $N_0+N_1=30$, and their vertical juxtaposition forms a CA of strength $t+1=4$. Now, the only possibility for the values of $N_0$ and $N_1$ is $N_0=N_1=15$ because $\mbox{CAN}(3,12,2)=15$ \cite{Colbourn:2010:CRA:1786803.1786988,Nurmela2004143}; so the unique valid multiset in this case is $\{15,15\}$. The NonIsoCA algorithm gives two distinct $\mbox{CA}(15;3,12,2)$, and using these CAs Algorithm \ref{alg:main} did not find a $\mbox{CA}(30;4,13,2)$. 

Similarly, to determine the existence of $\mbox{CA}(31;4,13,2)$ the unique valid multiset is $\{15,16\}$. Algorithm \ref{alg:main} tested the juxtapositions of the two non-isomorphic $\mbox{CA}(15;3,12,2)$ with the \numprint{44291} non-isomorphic $\mbox{CA}(16;3,12,2)$ reported by the NonIsoCA algorithm. Also in this case no $\mbox{CA}(31;4,13,2)$ was found. Therefore, $\mbox{CA}(32;4,13,2)$ is optimum, and $\mbox{CAN}(4,13,2)=32$.

Conceptually it is possible to run the NonIsoCA algorithm to prove the nonexistence of $\mbox{CA}(30;4,13,2)$ and $\mbox{CA}(31;4,13,2)$ directly in strength four. However, it is more convenient to use the NonIsoCA algorithm to construct the non-isomorphic CAs with strength $t=3$ and $k=12$ columns required by Algorithm \ref{alg:main} to search for $\mbox{CA}(30;4,13,2)$ and $\mbox{CA}(31;4,13,2)$. In a machine with processor AMD Opteron\texttrademark\ 6274 at 2.2 GHz the NonIsoCA algorithm takes approximately 1.38 hours to construct the two distinct $\mbox{CA}(15;3,12,2)$, and takes about 937 hours to construct the \numprint{44291} distinct $\mbox{CA}(16;3,12,2)$. However, the execution time of Algorithm \ref{alg:main} on the same machine is only 3 seconds for $\mbox{CA}(30;4,13,2)$, and 16 hours for $\mbox{CA}(31;4,13,2)$. Thus, the total time to determine that $\mbox{CAN}(4,13,2)=32$ was approximately 955 hours. In contrast, we attempted to construct the non-isomorphic $\mbox{CA}(31;4,13,2)$ using the NonIsoCA algorithm, but we aborted the search after 3 months of execution, because based on the partial results we estimated that the execution would not end any time soon.

In the process of proving the optimality of $\mbox{CA}(32;4,13,2)$ almost all execution time was consumed in constructing the \numprint{44291} non-isomorphic $\mbox{CA}(16;3,12,2)$. We could use Algorithm \ref{alg:main} to construct these CAs; however in this case Algorithm \ref{alg:main} is not the best option because there are too many CAs with strength $t=2$ and $k=11$ columns to be combined to form a $\mbox{CA}(16;3,12,2)$. Since $\mbox{CAN}(2,11,2)=7$, we can construct a $\mbox{CA}(16;3,12,2)$ by juxtaposing a $\mbox{CA}(7;2,11,2)$ and a $\mbox{CA}(9;2,11,2)$, or by juxtaposing two $\mbox{CA}(8;2,11,2)$. The number of non-isomorphic $\mbox{CA}(7;2,11,2)$ is only 26, but there are \numprint{377177} non-isomorphic $\mbox{CA}(8;2,11,2)$, and $\numprint{2148812219}$ distinct $\mbox{CA}(9;2,11,2)$. Thus, in Algorithm \ref{alg:main} the function \emph{generate\_juxapositions}() would be called $(26)(\numprint{2148812219}) + \numprint{377177}^2$ times.

The new covering array number $\mbox{CAN}(4,13,2)=32$ has important consequences. In \cite{TorresJimenez2015141} it was reported a Tower of Covering Arrays (TCA) beginning with $\mbox{CA}(8;2,11,2)$ and ending at $\mbox{CA}(256;7,16,2)$. A TCA is a succession of CAs where the first CA is $\mbox{CA}(N;t,k,v)$ and the $i$-th CA ($i\geq 0$) has $Nv^i$ rows, $k+i$ columns, strength $t+i$, and order $v$. The complete TCA constructed is this:
\begin{gather*}
\mbox{CA}(8;2,11,2), \mbox{CA}(16;3,12,2), \mbox{CA}(32;4,13,2), \mbox{CA}(64;5,14,2), \mbox{CA}(128;6,15,2), 
\mbox{CA}(256;7,16,2).
\end{gather*}

The first two CAs of the tower are not optimal because $\mbox{CAN}(2,11,2)=7$ and $\mbox{CAN}(3,12,2)$ = $15$. However, from $\mbox{CAN}(4,13,2)=32$ we have $\mbox{CAN}(5,14,2)=64$, $\mbox{CAN}(6,15,2)$ = $128$, and $\mbox{CAN}(7,16,2)=256$, due to the inequality $\mbox{CAN}(t+1,k+1,2)\geq 2\,\mbox{CAN}(t,k,2)$ \cite{Lawrence2011}, which says that the optimum CA with $k+1$ columns and strength $t+1$ has at least two times the number of rows of the optimum CA with $k$ columns and strength $t$. In a TCA with $v=2$ every CA, other than the first one, has exactly two times the number of rows of the previous CA, and so if the $i$-th CA is optimum then the $j$-th CAs, $j>i$, are also optimal.

We were unable to construct the distinct $\mbox{CA}(32;4,13,2)$ due to time constraints in the generation of the non-isomorphic $\mbox{CA}(17;3,12,2)$. However, from \cite{doi:10.1142/S1793830916500336} we known that $\mbox{CAN}(3,13,2)=16$, and that there are 89 distinct $\mbox{CA}(16;3,13,2)$; so the only valid multiset to construct a $\mbox{CA}(32;4,14,2)$ is $\{16,16\}$. Using the 89 distinct $\mbox{CA}(16;3,13,2)$ Algorithm \ref{alg:main} did not find a $\mbox{CA}(32;4,14,2)$, which implies $\mbox{CAK}(32;4,2)=13$, and so $\mbox{CA}(32;4,13,2)$ is optimum in both the number of rows and the number of columns.

\subsection{$\mbox{CAN}(5,8,2)=52$}
\label{sec:optimalityN52k8t5v2}

$\mbox{CAN}(5,8,2)$ is the first element of the class $\mbox{CAN}(t,t+3,2)$ whose exact value is unknown; its current status is $48$ $\leq$ $\mbox{CAN}(5,8,2)$ $\leq 52$ \cite{Colbourn:2010:CRA:1786803.1786988,TorresJimenez2012137}. To find $\mbox{CAN}(5,8,2)$ we need to check the juxtapositions of the non-isomorphic $\mbox{CA}(N_0;4,7,2)$ with the non-isomorphic $\mbox{CA}(N_1;4,7,2)$ for $N_0+N_1\in\{48$, $49$, $50$, $51$, $52\}$. The first step is to search a CA with 48 rows, if it does not exists the next step is to search a CA with 49 rows, and so on. 

As in the previous subsection, it is possible to run the NonIsoCA algorithm to determine directly in strength $t=5$ if $\mbox{CA}(48;5,8,2)$ exists, but this will take an impractical amount of time. So, the strategy is to use the NonIsoCA algorithm to generate the non-isomorphic $\mbox{CA}(24;4,7,2)$ required in Algorithm \ref{alg:main} to try to construct $\mbox{CA}(48;5,8,2)$. As shown in Subtable \ref{tbl:nonIsoCAsk7t4} there is only one non-isomorphic $\mbox{CA}(24;4,7,2)$. Subtable \ref{tbl:resultsk8t5} shows that no $\mbox{CA}(48;5,8,2)$ was constructed by Algorithm \ref{alg:main} from the juxtaposition of the unique $\mbox{CA}(24;4,7,2)$ with itself. This result is consistent with the demonstration of the nonexistence of $\mbox{CA}(48;5,13,2)$ done in \cite{Choi20122958}.

\begin{table}
  \caption{(a) Number of non-isomorphic $\mbox{CA}(M;4,7,2)$ for $M=24,25,26,27,28$. (b) Number of non-isomorphic $\mbox{CA}(N;5,8,2)$ constructed by juxtaposing $\mbox{CA}(N_0;4,7,2)$ and $\mbox{CA}(N_1;4,7,2)$, where $N=N_0+N_1$ and $48\leq N\leq 52$.}
  \label{tbl:optimalityN52k8t5} 
  \centering
  \begin{subtable}[t]{.38\linewidth}
    \centering
    \caption{Non-iso $\mbox{CA}(M;4,7,2)$}\label{tbl:nonIsoCAsk7t4}
    \begin{tabular}{@{}cc@{}}
    \hline
    $M$ & Non-iso \\
    \hline
    24 & 1 \\
    25 & 6 \\
    26 & 228 \\
    27 & \numprint{13012} \\
    28 & \numprint{919874} \\
    \hline
    \end{tabular} 
  \end{subtable}
  \begin{subtable}[t]{.58\linewidth}
    \centering
    \caption{Non-iso $\mbox{CA}(N;5,8,2)$}\label{tbl:resultsk8t5}
    \begin{tabular}{@{}ccc@{}}
    \hline
    $N$ & Multisets $\{N_0,N_1\}$ & Non-iso \\
    \hline
    48 & $\{24,24\}$ & 0 \\
    49 & $\{24,25\}$ & 0 \\
    50 & $\{24,26\},\{25,25\}$ & 0 \\
    51 & $\{24,27\},\{25,26\}$ & 0 \\
    52 & $\{24,28\},\{25,27\},\{26,26\}$ & 8 \\
    \hline
    \end{tabular}
  \end{subtable}
\end{table}

Now, to search if $\mbox{CA}(49;5,8,2)$ exists, we need to juxtapose the non-isomorphic $\mbox{CA}(24;4,7,2)$ with the non-isomorphic $\mbox{CA}(25;4,7,2)$. There is only one $\mbox{CA}(24;4,7,2)$, and for $\mbox{CA}(25;4,7,2)$ the NonIsoCA algorithm reported 6 distinct CAs. Using these CAs Algorithm \ref{alg:main} did not find a $\mbox{CA}(49;5,8,2)$. The same strategy is repeated to determine the existence of $\mbox{CA}(50;5,8,2)$, $\mbox{CA}(51;5,8,2)$, and $\mbox{CA}(52;5,8,2)$. From the results in Subtable \ref{tbl:resultsk8t5} we have $\mbox{CAN}(5,8,2)=52$, and there are eight distinct $\mbox{CA}(52;5,8,2)$. A consequence of the new covering array number $\mbox{CAN}(5,8,2)=52$ is the improvement of the lower bounds of $\mbox{CAN}(5,9,2)$, $\mbox{CAN}(5,10,2)$, $\mbox{CAN}(5,11,2)$, $\mbox{CAN}(5,12,2)$, and $\mbox{CAN}(5,13,2)$ from 50 \cite{Banbara2010} to 52.

\subsection{$\mbox{CAN}(5,9,2)=54$}
\label{sec:optimalityN54k9t5v2}

For $\mbox{CAN}(5,9,2)$ the current lower bound is 52 (Subsection \ref{sec:optimalityN52k8t5v2}) and the current upper bound is 54 \cite{TorresJimenez2012137}. Then, to determine the exact value of $\mbox{CAN}(5,9,2)$ we need to check if there is a CA with 52 or 53 rows. Subtable \ref{tbl:nonIsoCAsk8t4} shows the number of non-isomorphic $\mbox{CA}(M;4,8,2)$ generated by the NonIsoCA algorithm for $M=24,\ldots,30$. These CAs are used to search for the non-isomorphic $\mbox{CA}(N;5,9,2)$ with $N=52,53,54$. Subtable \ref{tbl:resultsk9t5} shows the valid multisets $\{N_0,N_1\}$ and the number of non-isomorphic CAs constructed for each $N\in\{52,53,54\}$. From the results we have $\mbox{CAN}(5,9,2)=54$, and there is only one distinct $\mbox{CA}(54;5,9,2)$, which is shown in Figure \ref{fig:uniqueN54k9t5}.

\begin{table}[t]
\caption{(a) Number of non-isomorphic $\mbox{CA}(M;4,8,2)$ for $M=24,\ldots,30$. (b) Number of non-isomorphic $\mbox{CA}(N;5,9,2)$ constructed by juxtaposing $\mbox{CA}(N_0;4,8,2)$ and $\mbox{CA}(N_1;4,8,2)$, where $N=N_0+N_1$ and $52\leq N\leq 54$.}
\label{tbl:optimalityN54k9t5}
\centering
\begin{subtable}[t]{.38\linewidth}
\centering
\caption{Non-iso $\mbox{CA}(M;4,8,2)$}\label{tbl:nonIsoCAsk8t4}
\begin{tabular}{@{}cc@{}}
\hline
$M$ & Non-iso \\
\hline
24 & 1 \\
25 & 7 \\
26 & 195 \\
27 & \numprint{9045} \\
28 & \numprint{522573} \\
29 & \numprint{27826894} \\
30 & \numprint{1374716212} \\
\hline
\end{tabular}
\end{subtable}
\begin{subtable}[t]{.58\linewidth}
\centering
\caption{Non-iso $\mbox{CA}(N;5,9,2)$}\label{tbl:resultsk9t5}
\begin{tabular}{@{}ccc@{}}
\hline
$N$ & Multisets $\{N_0,N_1\}$ & Non-iso \\
\hline
52 & $\{24,28\},\{25,27\},\{26,26\}$ & 0 \\
53 & $\{24,29\},\{25,28\},\{26,27\}$ & 0 \\
54 & $\{24,30\},\{25,29\},\{26,28\},\{27,27\}$ & 1 \\
\hline
\end{tabular}
\end{subtable}
\end{table}

\begin{figure}
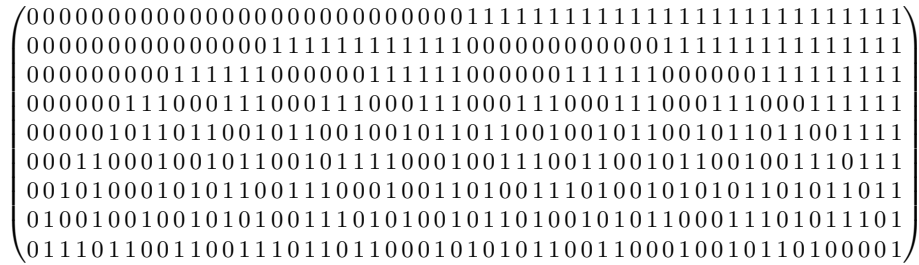

{\small
\begin{displaymath}
\begin{pmatrix}
0\,0\,0\,0\,0\,0\,0\,0\,0\,0\,0\,0\,0\,0\,0\,0\,0\,0\,0\,0\,0\,0\,0\,0\,0\,0\,0\,1\,1\,1\,1\,1\,1\,1\,1\,1\,1\,1\,1\,1\,1\,1\,1\,1\,1\,1\,1\,1\,1\,1\,1\,1\,1\,1\\
0\,0\,0\,0\,0\,0\,0\,0\,0\,0\,0\,0\,0\,0\,0\,1\,1\,1\,1\,1\,1\,1\,1\,1\,1\,1\,1\,0\,0\,0\,0\,0\,0\,0\,0\,0\,0\,0\,0\,1\,1\,1\,1\,1\,1\,1\,1\,1\,1\,1\,1\,1\,1\,1\\
0\,0\,0\,0\,0\,0\,0\,0\,0\,1\,1\,1\,1\,1\,1\,0\,0\,0\,0\,0\,0\,1\,1\,1\,1\,1\,1\,0\,0\,0\,0\,0\,0\,1\,1\,1\,1\,1\,1\,0\,0\,0\,0\,0\,0\,1\,1\,1\,1\,1\,1\,1\,1\,1\\
0\,0\,0\,0\,0\,0\,1\,1\,1\,0\,0\,0\,1\,1\,1\,0\,0\,0\,1\,1\,1\,0\,0\,0\,1\,1\,1\,0\,0\,0\,1\,1\,1\,0\,0\,0\,1\,1\,1\,0\,0\,0\,1\,1\,1\,0\,0\,0\,1\,1\,1\,1\,1\,1\\
0\,0\,0\,0\,0\,1\,0\,1\,1\,0\,1\,1\,0\,0\,1\,0\,1\,1\,0\,0\,1\,0\,0\,1\,0\,1\,1\,0\,1\,1\,0\,0\,1\,0\,0\,1\,0\,1\,1\,0\,0\,1\,0\,1\,1\,0\,1\,1\,0\,0\,1\,1\,1\,1\\
0\,0\,0\,1\,1\,0\,0\,0\,1\,0\,0\,1\,0\,1\,1\,0\,0\,1\,0\,1\,1\,1\,1\,0\,0\,0\,1\,0\,0\,1\,1\,1\,0\,0\,1\,1\,0\,0\,1\,0\,1\,1\,0\,0\,1\,0\,0\,1\,1\,1\,0\,1\,1\,1\\
0\,0\,1\,0\,1\,0\,0\,0\,1\,0\,1\,0\,1\,1\,0\,0\,1\,1\,1\,0\,0\,0\,1\,0\,0\,1\,1\,0\,1\,0\,0\,1\,1\,1\,0\,1\,0\,0\,1\,0\,1\,0\,1\,0\,1\,1\,0\,1\,0\,1\,1\,0\,1\,1\\
0\,1\,0\,0\,1\,0\,0\,1\,0\,0\,1\,0\,1\,0\,1\,0\,0\,1\,1\,1\,0\,1\,0\,1\,0\,0\,1\,0\,1\,1\,0\,1\,0\,0\,1\,0\,1\,0\,1\,1\,0\,0\,0\,1\,1\,1\,0\,1\,0\,1\,1\,1\,0\,1\\
0\,1\,1\,1\,0\,1\,1\,0\,0\,1\,1\,0\,0\,1\,1\,1\,0\,1\,1\,0\,1\,1\,0\,0\,0\,1\,0\,1\,0\,1\,0\,1\,1\,0\,0\,1\,1\,0\,0\,0\,1\,0\,0\,1\,0\,1\,1\,0\,1\,0\,0\,0\,0\,1\\
\end{pmatrix}
\end{displaymath}
}
\caption{Transpose of the unique $\mbox{CA}(54;5,9,2)$.}
\label{fig:uniqueN54k9t5}
\end{figure}

The new covering array number $\mbox{CAN}(5,9,2)=54$ and the result of Subsection \ref{sec:optimalityN52k8t5v2} $\mbox{CAN}(5,8,2)=52$ imply $\mbox{CAK}(52;5,2)=8$. In addition, $\mbox{CAN}(5,9,2)=54$ improves from 52 to 54 the lower bound of $\mbox{CAN}(5,10,2)$, $\mbox{CAN}(5,11,2)$, $\mbox{CAN}(5,12,2)$, and $\mbox{CAN}(5,13,2)$.

\subsection{Improving the Lower Bound of $\mbox{CAN}(6,9,2)$}
\label{sec:improvingk9t6v2}

The next CAN of the class $\mbox{CAN}(t,t+3,2)$ to be determined is $\mbox{CAN}(6,9,2)$. Its current status is $96\leq \mbox{CAN}(6,9,2)\leq 108$. From $\mbox{CAN}(5,8,2)=52$ (Subsection  \ref{sec:optimalityN52k8t5v2}) and from the inequality $\mbox{CAN}(t+1,k+1,2)\geq 2\,\mbox{CAN}(t,k,2)$ we have $\mbox{CAN}(6,9,2)\geq 104$. Therefore, the new lower bound of $\mbox{CAN}(6,9,2)$ is 104, but we can improve further this lower bound by using the algorithm developed in this work. 

To begin, we found that the juxtapositions of the 8 non-isomorphic $\mbox{CA}(52;5,8,2)$ found in Subsection \ref{sec:optimalityN52k8t5v2} with themselves do not produce a $\mbox{CA}(104;6,9,2)$; so $\mbox{CAN}(6,9,2)\geq 105$.

To determine the existence of $\mbox{CA}(105;6,9,2)$ we need to test the juxtaposition of the non-isomorphic $\mbox{CA}(52;5,8,2)$ with the non-isomorphic $\mbox{CA}(53;5,8,2)$. But to obtain the non-isomorphic $\mbox{CA}(53;5,8,2)$ we need to juxtapose $\mbox{CA}(N_0;4,7,2)$ and $\mbox{CA}(N_1;4,7,2)$ where $N_0+N_1=53$. Previously in Subsection \ref{sec:optimalityN52k8t5v2} were generated the non-isomorphic $\mbox{CA}(M;4,7,2)$ for $M=24,\ldots,28$, so we only need to generate the non-isomorphic $\mbox{CA}(29;4,7,2)$ to have all CAs with $t=4$ and $k=7$ required to construct  $\mbox{CA}(53;5,8,2)$. The NonIsoCA algorithm reported \numprint{58488647} distinct $\mbox{CA}(29;4,7,2)$, as shown in Subtable \ref{tbl:nonIsoCAsk7t4-2}. Subtable \ref{tbl:resultsk8t5-2} shows the multisets for $N=53$ and the number of non-isomorphic $\mbox{CA}(53;5,8,2)$ constructed by Algorithm \ref{alg:main}; in this case there are 213 distinct $\mbox{CA}(53;5,8,2)$. Subtable \ref{tbl:resultsk9t6} shows the result of juxtaposing the non-isomorphic $\mbox{CA}(52;5,8,2)$ with the non-isomorphic $\mbox{CA}(53;5,8,2)$ to try to construct $\mbox{CA}(105;6,9,2)$. No $\mbox{CA}(105;6,9,2)$ was generated, then $\mbox{CAN}(6,9,2)\geq 106$. 

\begin{table}
\caption{(a) Number of non-isomorphic $\mbox{CA}(M;4,7,2)$ for $M=29,30$. (b) Number of non-isomorphic $\mbox{CA}(N;5,8,2)$ constructed by juxtaposing $\mbox{CA}(N_0;4,7,2)$ and $\mbox{CA}(N_1;4,7,2)$, where $N=N_0+N_1$ and $53\leq N\leq 54$. (c) Number of non-isomorphic $\mbox{CA}(L;6,9,2)$ constructed by juxtaposing $\mbox{CA}(L_0;5,8,2)$ and $\mbox{CA}(L_1;5,8,2)$, where $L=L_0+L_1$ and $104\leq L\leq 106$.}
\label{tbl:improvingCANk9t6}
\centering
\begin{subtable}[t]{.36\linewidth}
\centering
\caption{Non-iso $\mbox{CA}(M;4,7,2)$}\label{tbl:nonIsoCAsk7t4-2}
\begin{tabular}{@{}cc@{}}
\hline
$M$ & Non-iso \\
\hline
29 & \numprint{58488647} \\
30 & \numprint{3177398378} \\
\hline
\end{tabular}
\end{subtable}
\begin{subtable}[t]{.62\linewidth}
\centering
\caption{Non-iso $\mbox{CA}(N;5,8,2)$}\label{tbl:resultsk8t5-2}
\begin{tabular}{@{}ccc@{}}
\hline
$N$ & Multisets $\{N_0,N_1\}$ & Non-iso \\
\hline
53 & $\{24,29\},\{25,28\},\{26,27\}$ & 213 \\
54 & $\{24,30\},\{25,29\},\{26,28\},\{27,27\}$ & \numprint{20450} \\
\hline
\end{tabular}
\end{subtable}
\begin{subtable}[t]{.50\linewidth}
\centering
\caption{Non-iso $\mbox{CA}(L;6,9,2)$}\label{tbl:resultsk9t6}
\begin{tabular}{@{}ccc@{}}
\hline
$L$ & Multisets $\{L_0,L_1\}$ & Non-iso \\
\hline
104 & $\{52,52\}$ & 0 \\
105 & $\{52,53\}$ & 0 \\
106 & $\{52,54\},\{53,53\}$ & 0 \\
\hline
\end{tabular}
\end{subtable}
\end{table}

Note that we are using the non-isomorphic CAs generated by Algorithm \ref{alg:main} in another execution of it, because from the non-isomorphic CAs with $t=4$ and $k=7$ are constructed the non-isomorphic CAs with $t=5$ and $k=8$, and these last CAs are used to search for the non-isomorphic CAs with $t=6$ and $k=9$.

Now, to determine if $\mbox{CA}(106;6,9,2)$ exists we first compute the valid multisets $\{L_0,L_1\}$ such that the juxtaposition of $\mbox{CA}(L_0;5,8,2)$ and $\mbox{CA}(L_1;5,8,2)$ might produce $\mbox{CA}(106;6,9,2)$. In this case there are two possibilities: $\{52,54\}$ and $\{53,53\}$. The non-isomorphic $\mbox{CA}(52;5,8,2)$ and $\mbox{CA}(53;5,8,2)$ have been constructed previously, but it remains to construct the distinct $\mbox{CA}(54;5,8,2)$. To do this, we juxtapose the non-isomorphic $\mbox{CA}(N_0;4,7,2)$ with the non-isomorphic $\mbox{CA}(N_1;4,7,2)$ such that $N_0+N_1=54$. Subtable \ref{tbl:nonIsoCAsk7t4-2} shows that there are \numprint{3177398378} distinct $\mbox{CA}(30;4,7,2)$. Subtable \ref{tbl:resultsk8t5-2} shows the results of juxtaposing $\mbox{CA}(N_0;4,7,2)$ and $\mbox{CA}(N_1;4,7,2)$ where $N_0+N_1=54$; in total there are \numprint{20450} distinct $\mbox{CA}(54;5,8,2)$. Subtable \ref{tbl:resultsk9t6} contains the result of juxtaposing the distinct $\mbox{CA}(52;5,8,2)$ with the distinct $\mbox{CA}(54;5,8,2)$, and the distinct $\mbox{CA}(53;5,8,2)$ with themselves. No $\mbox{CA}(106;6,9,2)$ was generated, thus $\mbox{CAN}(6,9,2)\geq 107$. 

It was not possible to determine the existence of $\mbox{CA}(107;6,9,2)$ due to the huge computational time required to construct the non-isomorphic $\mbox{CA}(55;5,8,2)$. However, the result $\mbox{CAN}(6,9,2)\geq 107$ improves the lower bounds of $\mbox{CAN}(t,t+3,2)$ for $7\leq t\leq 11$. Their new values are $214 \leq \mbox{CAN}(7,10,2)\leq  222$; $428 \leq \mbox{CAN}(7,11,2)\leq  496$; $856 \leq \mbox{CAN}(7,12,2)\leq  992$; $1712 \leq \mbox{CAN}(7,13,2)\leq  2016$; and $3424 \leq \mbox{CAN}(7,14,2)\leq  4032$. Upper bounds were taken from \cite{Colbourn:2010:CRA:1786803.1786988} for $t=7$, and from \cite{Torres-Jimenez2015} for $t=8,9,10,11$.

\subsection{Results for $v=3$}

This section presents the computational results obtained for CAs with order $v=3$. The results are given in a list format. Lower and upper bounds were taken respectively from \cite{Colbourn:2010:CRA:1786803.1786988} and \cite{CATables}:

\begin{itemize}
\item \emph{There is a unique $\mbox{CA}(33;3,6,3)$}. Although $\mbox{CA}(33;3,6,3)$ is known to be optimum \cite{Chateauneuf1999}, we found that there is only one distinct CA. Since $\mbox{CAN}(2,5,3)$ $=$ $11$, the only valid multiset to construct $\mbox{CA}(33;3,6,3)$ is $\{11,11,11\}$. The NonIsoCA algorithm reported 3 non-isomorphic $\mbox{CA}(11;2,5,3)$, and using these CAs Algorithm \ref{alg:main} constructed only one $\mbox{CA}(33;3,6,3)$, which is shown next (transposed): 

\begin{displaymath}
\begin{pmatrix}
0\,0\,0\,0\,0\,0\,0\,0\,0\,0\,0\,1\,1\,1\,1\,1\,1\,1\,1\,1\,1\,1\,2\,2\,2\,2\,2\,2\,2\,2\,2\,2\,2\\
0\,0\,0\,0\,1\,1\,1\,1\,2\,2\,2\,0\,0\,0\,0\,1\,1\,1\,2\,2\,2\,2\,0\,0\,0\,1\,1\,1\,1\,2\,2\,2\,2\\
0\,0\,1\,2\,0\,1\,1\,2\,0\,1\,2\,0\,1\,2\,2\,0\,1\,2\,0\,1\,1\,2\,0\,1\,2\,0\,1\,2\,2\,0\,0\,1\,2\\
0\,1\,0\,2\,2\,1\,2\,0\,2\,0\,1\,2\,1\,0\,1\,0\,1\,2\,1\,0\,2\,2\,1\,2\,0\,1\,0\,1\,2\,0\,2\,1\,0\\
0\,1\,2\,1\,2\,0\,1\,0\,0\,1\,2\,2\,1\,2\,0\,1\,2\,0\,0\,0\,2\,1\,2\,0\,1\,0\,2\,1\,2\,2\,1\,1\,0\\
0\,1\,2\,2\,1\,2\,0\,1\,2\,1\,0\,0\,0\,1\,2\,2\,1\,0\,1\,0\,2\,1\,2\,1\,0\,0\,0\,1\,2\,1\,0\,2\,2\\
\end{pmatrix}
\end{displaymath}

\item \emph{Nonexistence of $\mbox{CA}(99;4,7,3)$}. The current status of  $\mbox{CAN}(4,7,3)$ is $99$ $\leq$ $\mbox{CAN}(4$, $7$, $3)$ $\leq$ $123$. Using the unique $\mbox{CA}(33;3,6,3)$ Algorithm \ref{alg:main} determined the nonexistence of $\mbox{CA}(99;4,7,3)$. Therefore, $100\leq\mbox{CAN}(4,7,3)\leq 123$.
\item \emph{Nonexistence of $\mbox{CA}(36;3,7,3)$}. Currently $36\leq \mbox{CAN}(3,7,3)\leq 42$. Since $\mbox{CAN}(2,6,3)$ $=$ $12$, the only way to form a $\mbox{CA}(36;3,7,3)$ is by juxtaposing three covering arrays $\mbox{CA}(12;2,6,3)$. The NonIsoCA algorithm produced 13 non-isomorphic $\mbox{CA}(12;2,6,3)$, from which Algorithm \ref{alg:main} did not find a $\mbox{CA}(36;3,7,3)$. Thus, $37\leq\mbox{CAN}(3,7,3)\leq 42$.
\item \emph{Nonexistence of $\mbox{CA}(39;3,9,3)$}. The current lower bound of $\mbox{CAN}(3,9,3)$ is $39$ and its current upper bound is $45$. Given that $\mbox{CAN}(2,8,3)=13$ the only possibility to form a $\mbox{CA}(39;3,9,3)$ is juxtaposing three  $\mbox{CA}(13;2,8,3)$. The number of non-isomorphic $\mbox{CA}(13;2,8,3)$ constructed by the NonIsoCA algorithm is five. Using these CAs Algorithm \ref{alg:main} searched for $\mbox{CA}(39;3,9,3)$ but no such CA was found. Therefore, $40\leq\mbox{CAN}(3,9,3)\leq 45$. 
\end{itemize}

\section{Conclusions}
\label{sec:conclusions}

In this work we prove that if exists $C=\mbox{CA}(N;t+1,k+1,v)$ then it can be constructed from the juxtaposition of $v$ covering arrays $\mbox{CA}(N_1;t,k,v)$, $\mbox{CA}(N_1;t,k,v)$, $\ldots$, $\mbox{CA}(N_{v-1};t,k,v)$ where $N=\sum_{i=0}^{v-1}N_i$, plus a column formed by concatenating $N_i$ elements equal to $i$ for $0\leq i\leq v-1$. We used this fact to develop an algorithm that determines the existence or nonexistence of $\mbox{CA}(N;t+1,k+1,v)$ by testing all possible juxtapositions of $v$ CAs with strength $t$ and $k$ columns. If none juxtaposition generates a $\mbox{CA}(N;t+1,k+1,v)$ then a CA with these parameters does not exist. If we know that $\mbox{CA}(N;t+1,k+1,v)$ exists and we find that $\mbox{CA}(N-1;t+1,k+1,v)$ does not exists, then we conclude that the first CA is optimal and therefore $\mbox{CAN}(t+1,k+1,v) = N$.

The algorithm was used to determine the existence of some CAs, and from the results obtained we derived the following covering array numbers: $\mbox{CAN}(13,4,2)=32$, $\mbox{CAN}(5,8,2)=52$, and $\mbox{CAN}(5,9,2)=54$. To the best of our knowledge these CANs had not been determined before by any other technique; the  previous results were respectively $30\leq\mbox{CAN}(13,4,2)\leq 32$, $48\leq\mbox{CAN}(5,8,2)\leq 52$, and $50\leq\mbox{CAN}(5,9,2)\leq 54$. The optimality of $\mbox{CA}(32;4,13,2)$ implies the optimality of $\mbox{CA}(64;5,14,2)$, $\mbox{CA}(128;6,15,2)$, and $\mbox{CA}(256;7,16,2)$, due to some properties of the CAs. Thus, the implications of $\mbox{CAN}(4,13,2)=32$ are very important, since without this result, for example, we would have to prove the nonexistence of $\mbox{CA}(255;7,16,2)$ to conclude the optimality of $\mbox{CA}(256;7,16,2)$, but $\mbox{CA}(255;7,16,2)$ is too large for an exact algorithm. Another important result is the improvement of the lower bound of $\mbox{CAN}(6,9,2)$ from 96 to 107, which in turns improves the lower bound of $\mbox{CAN}(t,t+3,2)$ for $t\geq 7$. For $v=3$ the results obtained were the uniqueness of $\mbox{CA}(33;3,6,3)$, and the nonexistence of $\mbox{CA}(99;4,7,3)$, $\mbox{CA}(36;3,7,3)$, and  $\mbox{CA}(39;3,9,3)$.

It is true that our algorithm required a lot of computational time to determine the new covering array numbers; for example $\mbox{CAN}(13,4,2)=32$ took over a month. However, the instances processed in this work are of considerable size to be handled by an exact algorithm, and that is the reason why these covering array numbers had not been found before. Our algorithm is faster than previous methods because it searches for a $\mbox{CA}(N;t+1,k+1,v)$ with a certain structure; this CA is formed by $v$ blocks, where each block is not an arbitrary array but a CA with strength $t$ and $k$ columns. This allows to our algorithm to handle larger instances.

\acknowledgements
The authors acknowledge to: ABACUS-CINVESTAV, CONACYT grant EDO\-MEX-2011-COI-165873 for providing access of high performance computing, and General Coordination of Information and Communications Technologies (CGSTIC) at CINVESTAV for providing HPC resources on the Hybrid Cluster Supercomputer \textquotedblleft Xiuhcoatl\textquotedblright. The project 238469 CONACyT M\'etodos Exactos para
Construir Covering Arrays \'Optimos have funded partially the research reported in this paper.

\nocite{*}
\bibliographystyle{abbrvnat}
\bibliography{references}
\label{sec:biblio}

\end{document}